\newtheorem{theorem}{Theorem}[section]
\newtheorem{corollary}[theorem]{Corollary}
\newtheorem{problem}[theorem]{Problem}
\newenvironment{proof}[1][Proof]{\noindent\textbf{#1.} }{\ \rule{0.5em}{0.5em}}
\begin{document}

\title{Vertices Belonging to All Critical Independent Sets of a Graph}
\author{Vadim E. Levit\\Ariel University Center of Samaria, Israel\\levitv@ariel.ac.il
\and Eugen Mandrescu\\Holon Institute of Technology, Israel\\eugen\_m@hit.ac.il}
\date{}
\maketitle

\begin{abstract}
Let $G=\left(  V,E\right)  $ be a graph. A set $S\subseteq V$ is
\textit{independent} if no two vertices from $S$ are adjacent, and by
$\mathrm{Ind}(G)$ ($\Omega(G)$) we mean the set of all (maximum) independent
sets of $G$, while $\mathrm{core}(G)=\cap\{S:S\in\Omega(G)\}$,
\cite{LevMan2002a}. The \textit{neighborhood} of $A\subseteq V$ is
$N(A)=\{v\in V:N(v)\cap A\neq\emptyset\}$. The \textit{independence number
}$\alpha(G)$ is the cardinality of each $S\in\Omega\left(  G\right)  $, and
$\mu(G)$ is the size of a maximum matching of $G$.

The number $id_{c}(G)=\max\{\left\vert I\right\vert -\left\vert
N(I)\right\vert :I\in\mathrm{Ind}(G)\}$ is called the \textit{critical
independence difference} of $G$, and $A\in\mathrm{Ind}(G)$ is
\textit{critical} if $\left\vert A\right\vert -\left\vert N(A)\right\vert
=id_{c}(G)$, \cite{Zhang}. We define $\mathrm{\ker}(G)=\cap\left\{  S:S\text{
\textit{is a critical independent set}}\right\}  $.

In this paper we prove that if a graph $G$ is non-quasi-regularizable (i.e.,
there exists some $A\in\mathrm{Ind}(G)$, such that $\left\vert A\right\vert
>\left\vert N(A)\right\vert $), then:

\begin{itemize}
\item $\mathrm{\ker}(G)\subseteq\mathrm{core}(G)$

\item $\left\vert \mathrm{\ker}(G)\right\vert >id_{c}\left(  G\right)
\geq\alpha\left(  G\right)  -\mu\left(  G\right)  \geq1$.

\end{itemize}

\textbf{Keywords:} independent set, critical set, critical difference, maximum matching

\end{abstract}

\section{Introduction}

Throughout this paper $G=(V,E)$ is a simple (i.e., a finite, undirected,
loopless and without multiple edges) graph with vertex set $V=V(G)$ and edge
set $E=E(G)$. We consider only graphs without isolated vertices.

If $X\subseteq V$, then $G[X]$ is the subgraph of $G$ spanned by $X$. By $G-W$
we mean either the subgraph $G[V-W]$, if $W\subseteq V(G)$, or the partial
subgraph $H=(V,E-W)$ of $G$, for $W\subseteq E(G)$. In either case, we use
$G-w$, whenever $W$ $=\{w\}$. If $X,Y\subset V$ are non-empty and disjoint,
then we denote $\left(  X,Y\right)  =\left\{  xy:xy\in E,x\in X,y\in
Y\right\}  $.

The \textit{neighborhood} of a vertex $v\in V$ is the set $N(v)=\{w:w\in V$
\textit{and} $vw\in E\}$, while the \textit{closed neighborhood} of $v\in V$
is $N[v]=N(v)\cup\{v\}$; in order to avoid ambiguity, we use also $N_{G}(v)$
instead of $N(v)$. In particular, if $\left\vert N(v)\right\vert =1$, then $v$
is a \textit{pendant vertex} of $G$, and \textrm{pend}$(G)=\{v\in V(G):v$
\textit{is a pendant vertex in} $G\}$. The \textit{neighborhood} of
$A\subseteq V$ is\emph{ }denoted by $N(A)=N_{G}(A)=\{v\in V:N(v)\cap
A\neq\emptyset\}$, and $N[A]=N(A)\cup A$.

A set $S\subseteq V(G)$ is \textit{independent} if no two vertices from $S$
are adjacent, and by $\mathrm{Ind}(G)$ we mean the set of all the independent
sets of $G$. An independent set of maximum size will be referred to as a
\textit{maximum independent set} of $G$, and the \textit{independence number
}of $G$ is $\alpha(G)=\max\{\left\vert S\right\vert :S\in\mathrm{Ind}(G)\}$. A
graph $G$ is \textit{quasi-regularizable} if one can replace each edge of $G$
with a non-negative integer number of parallel copies, so as to obtain a
regular multigraph of degree $\neq0$, \cite{berge2}. For instance, $K_{4}-e$,
$e\in E\left(  K_{4}\right)  $,\ is quasi-regularizable, while $P_{3}$ is not
quasi-regularizable. It is clear that a quasi-regularizable graph can not have
isolated vertices.

\begin{theorem}
\label{th11}For a graph $G$ the following assertions are equivalent:

\emph{(i)} quasi-regularizable;

\emph{(ii) }\cite{berge2} $\left\vert S\right\vert \leq\left\vert
N(S)\right\vert $ holds for every $S$ $\in\mathrm{Ind}(G)$;

\emph{(iii)} \cite{Tutte} $G$ has a perfect $2$-matching, i.e., $G$ contains a
system of vertex-disjoint odd cycles and edges covering all its vertices.
\end{theorem}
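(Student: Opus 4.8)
The plan is to establish the cyclic chain of implications (iii) $\Rightarrow$ (i) $\Rightarrow$ (ii) $\Rightarrow$ (iii), which makes the three assertions mutually equivalent. Two of these links are short counting arguments, and essentially all the difficulty is concentrated in the passage (ii) $\Rightarrow$ (iii).

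For the two easy links I would argue directly. For (iii) $\Rightarrow$ (i): given a perfect $2$-matching, that is, a family of vertex-disjoint odd cycles and edges covering $V$, define a multiplicity function $m:E\to\mathbb{Z}_{\geq 0}$ by putting $m(e)=1$ on every edge lying on one of the odd cycles, $m(e)=2$ on every covering edge, and $m(e)=0$ elsewhere; each vertex then has weighted degree exactly $2$, so the resulting multigraph is $2$-regular and $G$ is quasi-regularizable. For (i) $\Rightarrow$ (ii): let $m$ realize a $d$-regular multigraph with $d\geq 1$ and fix $S\in\mathrm{Ind}(G)$. Since $S$ is independent, every edge meeting $S$ has its other endpoint in $N(S)$, each such edge is counted once from the $N(S)$ side, and each $u\in N(S)$ carries weighted degree $d$; hence $d\,|S|=\sum_{v\in S}\sum_{e\ni v}m(e)=\sum_{e\in(S,N(S))}m(e)\leq\sum_{u\in N(S)}\sum_{e\ni u}m(e)=d\,|N(S)|$, and dividing by $d$ gives $|S|\leq|N(S)|$.

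The hard direction is (ii) $\Rightarrow$ (iii). I would first recast the hypothesis into a Tutte-type deficiency form, namely that ``$|S|\leq|N(S)|$ for every $S\in\mathrm{Ind}(G)$'' is equivalent to ``$i(G-S)\leq|S|$ for every $S\subseteq V$'', where $i(H)$ denotes the number of isolated vertices of $H$. The forward implication follows by taking $S=N(I)$ for independent $I$, which isolates all of $I$ in $G-N(I)$ and so gives $|I|\leq i(G-N(I))\leq|N(I)|$; the reverse follows by observing that the isolated vertices of $G-S$ form an independent set $I$ with $N(I)\subseteq S$, whence $i(G-S)=|I|\leq|N(I)|\leq|S|$ by (ii). With the hypothesis in this form I would invoke the structure and deficiency theorem for maximum $2$-matchings: the vertices of the (half-integral) fractional matching polytope are supported on vertex-disjoint edges and odd cycles, and the uncovered-vertex deficiency equals $\max_{S\subseteq V}\left(i(G-S)-|S|\right)$. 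The inequality $i(G-S)\leq|S|$ forces this deficiency to vanish, so the maximum $2$-matching is perfect and has exactly the claimed form.

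The main obstacle is precisely this deficiency/structure theorem for $2$-matchings, which is the content attributed to \cite{Tutte}. I expect the cleanest self-contained route to run through the bipartite expansion of $G$ together with König--Egerváry and the half-integrality of the fractional matching polytope; the delicate point is controlling the minimizer in the deficiency formula so that its ``isolated remainder'' is genuinely independent in $G$, which is exactly what the equivalence of the two forms of the hypothesis is designed to secure.
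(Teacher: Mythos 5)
The paper does not prove Theorem \ref{th11} at all: it is stated as background, with (ii) attributed to \cite{berge2} and (iii) to \cite{Tutte}, so there is no internal proof to compare yours against; your proposal has to be judged as a reconstruction of the classical argument. On that score it is essentially sound. The implications (iii) $\Rightarrow$ (i) and (i) $\Rightarrow$ (ii) are complete and correct as written; the double counting $d\left\vert S\right\vert =\sum_{e\in(S,N(S))}m(e)\leq d\left\vert N(S)\right\vert $ is exactly the standard argument. Your recasting of (ii) as the deficiency condition $i(G-S)\leq\left\vert S\right\vert $ for all $S\subseteq V$ is also correct, although you have swapped the labels: taking $S=N(I)$ shows the deficiency condition implies (ii), while the observation that the isolated vertices of $G-S$ form an independent set with neighborhood inside $S$ gives the converse.

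The one place where the argument is not self-contained is, as you acknowledge, the deficiency/structure theorem for fractional ($2$-)matchings. Note that this is essentially the very result the paper attributes to \cite{Tutte}, so invoking it as a black box makes (ii) $\Rightarrow$ (iii) borderline circular; you should carry out the bipartite-expansion route you sketch, which is the classical proof. Double $G$ into a bipartite graph $G^{\prime}$ with classes $V^{\prime}$, $V^{\prime\prime}$ and edges $u^{\prime}v^{\prime\prime}$ for $uv\in E$; Hall's condition for $G^{\prime}$ reads $\left\vert N_{G}(A)\right\vert \geq\left\vert A\right\vert $ for \emph{all} $A\subseteq V$; a perfect matching of $G^{\prime}$ projects to multiplicities $m(uv)\in\{0,1,2\}$ with total degree $2$ at each vertex, whose support splits into $m=2$ edges and cycles of $m=1$ edges, and even cycles may be replaced by alternate $m=2$ edges, yielding (iii). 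The ``delicate point'' you flag --- passing from independent sets in (ii) to arbitrary sets in Hall's condition --- is exactly Zhang's theorem, quoted in the paper as Theorem \ref{Theorem3} ($d_{c}(G)=id_{c}(G)$), and it has a three-line direct proof you could use instead of controlling the minimizer: for any $A\subseteq V$ the set $I=A-N(A)$ is independent, no vertex of $A\cap N(A)$ has a neighbour in $I$, hence $N(I)\subseteq N(A)-\left(  A\cap N(A)\right)  $ and $\left\vert I\right\vert -\left\vert N(I)\right\vert \geq\left\vert A\right\vert -\left\vert N(A)\right\vert $. With that substitution your outline closes into a complete and correct proof.
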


Let $\Omega(G)=\{S:S$ \textit{is a maximum independent set of} $G\}$ and
$\xi(G)=\left\vert \mathrm{core}(G)\right\vert $, where $\mathrm{core}%
(G)=\cap$ $\{S:S\in\Omega(G)\}$, \cite{LevMan2002a}.

Similarly, let $\mathrm{corona}(G)=\cup$ $\{S:S\in\Omega(G)\}$, and
$\zeta(G)=|\mathrm{corona}(G)|$, \cite{BorosGolLev}.

A \textit{matching} is a set of non-incident edges of $G$; a matching of
maximum cardinality $\mu(G)$ is a \textit{maximum matching}, and a
\textit{perfect matching} is a matching covering all the vertices of $G$.

In the sequel we need the following characterization of a maximum independent
set of a graph, due to Berge.

\begin{theorem}
\label{th2}\cite{berge2} An independent set $S$ belongs to $\Omega(G)$ if and
only if every independent set $A$ of $G$, disjoint from $S$, can be matched
into $S$.
\end{theorem}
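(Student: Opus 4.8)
The plan is to prove both implications via a bipartite matching argument, using Hall's theorem for the forward direction and a direct injection-counting argument for the reverse. Throughout, "$A$ can be matched into $S$" I read as the existence of a matching saturating $A$ that pairs each vertex of $A$ with a distinct adjacent vertex of $S$, i.e. an $A$-saturating matching in the bipartite graph $H$ whose parts are $A$ and $S$ and whose edges are the edges of $G$ joining $A$ to $S$.

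For the forward direction I would assume $S \in \Omega(G)$ and fix an arbitrary independent set $A$ disjoint from $S$. By Hall's theorem, $A$ can be matched into $S$ precisely when $|N(X) \cap S| \geq |X|$ for every $X \subseteq A$, so I would argue by contradiction. If some $X \subseteq A$ violated Hall's condition, i.e. $|N(X) \cap S| < |X|$, then I would form the swapped set $S' = (S \setminus N(X)) \cup X$. The key observation is that $S'$ is independent: the set $X \subseteq A$ is independent, the set $S \setminus N(X) \subseteq S$ is independent, and by construction no vertex of $S \setminus N(X)$ has a neighbour in $X$. Meanwhile $|S'| = |S| - |N(X) \cap S| + |X| > |S|$, contradicting the maximality of $S$. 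Hence Hall's condition holds and the desired matching exists.

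For the reverse direction I would assume that every independent set disjoint from $S$ can be matched into $S$, take any $T \in \Omega(G)$, and apply the hypothesis to $A = T \setminus S$, which is independent and disjoint from $S$. This yields an injection $f : T \setminus S \to S$ with each $a$ adjacent to $f(a)$. The crucial point is that $f(a) \notin T$: since $a \in T$ and $T$ is independent, no neighbour of $a$ can lie in $T$, so $f$ in fact maps $T \setminus S$ injectively into $S \setminus T$. Therefore $|T \setminus S| \leq |S \setminus T|$, and adding $|S \cap T|$ to both sides gives $|T| \leq |S|$; since $T$ is maximum, this forces $S \in \Omega(G)$.

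The machinery here (Hall's theorem, the cardinality bookkeeping) is standard; the step I expect to require the most care is the swap construction $S' = (S \setminus N(X)) \cup X$ in the forward direction, where one must verify that deleting exactly the $S$-neighbours of $X$ simultaneously preserves independence and strictly increases the size. Once that is in place, the remaining arguments are elementary counting.
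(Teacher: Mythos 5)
Your proof is correct. Note that the paper itself does not prove this statement: Theorem \ref{th2} is imported from Berge \cite{berge2} and used as a black box, so there is no in-paper argument to compare against. Your two directions are the standard proof of Berge's characterization and both check out. In the forward direction, the swap set $S'=(S\setminus N(X))\cup X$ is indeed independent (any edge from $S\setminus N(X)$ to $X$ would put its $S$-endpoint in $N(X)$), and since $X\subseteq A$ is disjoint from $S$ the count $\left\vert S'\right\vert =\left\vert S\right\vert -\left\vert N(X)\cap S\right\vert +\left\vert X\right\vert >\left\vert S\right\vert$ is exact, contradicting maximality; Hall's condition in the bipartite graph between $A$ and $S$ (which is all of $G[A\cup S]$, both sides being independent) then yields the matching. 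In the reverse direction, applying the hypothesis to $T\setminus S$ for $T\in\Omega(G)$ and observing that each matched partner $f(a)$ is a neighbour of $a\in T$, hence lies in $S\setminus T$, gives $\left\vert T\setminus S\right\vert \leq\left\vert S\setminus T\right\vert$ and so $\left\vert S\right\vert \geq\left\vert T\right\vert =\alpha(G)$, which is exactly what is needed. The degenerate cases ($X=\emptyset$, $T\subseteq S$) are harmless under your conventions. This is a complete and self-contained proof of a result the paper leaves to the literature.
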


$G$ is called a \textit{K\"{o}nig-Egerv\'{a}ry graph }provided $\alpha
(G)+\mu(G)=\left\vert V(G)\right\vert $ \cite{dem,ster}. It is known that each
bipartite graph satisfies this property.

\begin{theorem}
\label{th5} \cite{LevMan2003} If $G$ is a K\"{o}nig-Egerv\'{a}ry graph, $M$ is
a maximum matching, then $M$ matches $V\left(  G\right)  -S$ into $S$, for
every $S\in\Omega\left(  G\right)  $, and $\mu\left(  G\right)  =\left\vert
V\left(  G\right)  -S\right\vert $.
\end{theorem}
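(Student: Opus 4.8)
The plan is to reduce everything to a single double-counting argument on the edges of $M$, exploiting the König-Egerváry equality together with the independence of $S$.

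First I would dispose of the cardinality claim, which requires essentially no work. Since $S\in\Omega(G)$ we have $\left\vert S\right\vert =\alpha(G)$, and the defining identity $\alpha(G)+\mu(G)=\left\vert V(G)\right\vert$ gives immediately $\left\vert V(G)-S\right\vert =\left\vert V(G)\right\vert -\alpha(G)=\mu(G)$. This settles the second assertion $\mu(G)=\left\vert V(G)-S\right\vert$ outright, and — crucially — records that the maximum matching has exactly as many edges as there are vertices in $V(G)-S$.

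Next I would examine how the edges of $M$ sit relative to the partition $V(G)=S\cup\left(V(G)-S\right)$. Because $S$ is independent, no edge of $G$, and in particular no edge of $M$, can have both endpoints in $S$; hence every edge of $M$ meets $V(G)-S$. Classify the edges of $M$ into those with both endpoints in $V(G)-S$, say $a$ of them, and those with exactly one endpoint in $V(G)-S$ and the other in $S$, say $b$ of them, so that $a+b=\left\vert M\right\vert =\mu(G)$. Since $M$ is a matching, the vertices of $V(G)-S$ that it covers are pairwise distinct and number $2a+b$; as they all lie in $V(G)-S$ we get $2a+b\leq\left\vert V(G)-S\right\vert =\mu(G)=a+b$, which forces $a=0$.

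The conclusion now falls out. Every edge of $M$ joins a vertex of $V(G)-S$ to a vertex of $S$, and these $b=\mu(G)$ edges cover $\mu(G)=\left\vert V(G)-S\right\vert$ distinct vertices of $V(G)-S$, hence all of them, each matched to a distinct vertex of $S$. Therefore $M$ matches $V(G)-S$ into $S$, as claimed. The only real obstacle is psychological rather than technical: a priori a maximum matching could either waste an edge entirely inside the minimum cover $V(G)-S$ or fail to saturate $V(G)-S$, and it is precisely the tight count $\left\vert M\right\vert =\left\vert V(G)-S\right\vert$ coming from the König-Egerváry property that simultaneously rules both possibilities out. (One could instead phrase the argument through Theorem \ref{th2}, but the direct counting is cleaner and self-contained.)
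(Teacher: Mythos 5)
Your proof is correct. Note that the paper itself gives no proof of Theorem \ref{th5} --- it is stated with a citation to \cite{LevMan2003} --- so there is no internal argument to compare against; your write-up supplies a complete, self-contained proof. The logic checks out at every step: the K\"{o}nig-Egerv\'{a}ry identity $\alpha(G)+\mu(G)=\left\vert V(G)\right\vert$ together with $\left\vert S\right\vert =\alpha(G)$ yields $\left\vert V(G)-S\right\vert =\mu(G)$ immediately; independence of $S$ forces every edge of $M$ to meet $V(G)-S$; and your count of $M$-covered vertices in $V(G)-S$, namely $2a+b\leq\left\vert V(G)-S\right\vert =a+b$, correctly forces $a=0$, after which the $b=\mu(G)$ remaining edges must saturate all $\mu(G)$ vertices of $V(G)-S$, each matched to a distinct vertex of $S$. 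Conceptually this is the standard duality observation --- in a K\"{o}nig-Egerv\'{a}ry graph the complement of a maximum independent set is a minimum vertex cover whose size equals $\mu(G)$, and this tightness leaves a maximum matching no slack either to waste an edge inside the cover or to leave a cover vertex unsaturated --- but your explicit edge classification makes the argument elementary, avoiding any appeal to K\"{o}nig's theorem, Hall's theorem, or Berge's Theorem \ref{th2}. One small stylistic remark: it is worth stating explicitly that $V(G)-S$ is a vertex cover (equivalently, that $S$ is independent so no edge lies inside $S$), which you do implicitly in the classification step; as written the proof is sound and needs no repair.
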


In Boros \emph{et al.} \cite{BorosGolLev} it has been proved that if $G$ is
connected and $\alpha(G)>\mu(G)$, then $\xi(G)=\left\vert \mathrm{core}%
(G)\right\vert >\alpha(G)-\mu(G)$. This strengthened the following finding
stated in \cite{LevMan2002a}: if $\alpha(G)>(\left\vert V(G)\right\vert
+k-1)/2$, then $\xi(G)\geq k+1$; moreover, $\xi(G)\geq k+2$ is valid, whenever
$\left\vert V(G)\right\vert +k-1$ is an even number. For $k=1$, the previous
inequality provides us with a generalization of a result of Hammer \emph{et
al.} \cite{HamHanSim} claiming that if a graph $G$ has $\alpha(G)>\left\vert
V\left(  G\right)  \right\vert /2$, then $\xi(G)\geq1$. In \cite{LevMan2001}
it was shown that if $G$ is a connected bipartite graph with $\left\vert
V(G)\right\vert \geq2$, then $\xi(G)\neq1$. Jamison \cite{Jamison}, Zito
\cite{Zito}, and Gunther \emph{et al.} \cite{GunHarRall} proved independently
that $\xi(G)\neq1$ is true for any tree $T$.

In Chleb\'{\i}k \emph{et al.} \cite{ChCh2008} it has been found that if there
is some $S\in\mathrm{Ind}(G)$, such that $\left\vert S\right\vert >\left\vert
N(S)\right\vert $, then $\left\vert \mathrm{core}(G)\right\vert >\max
\{\left\vert I\right\vert -\left\vert N(I)\right\vert :I\in\mathrm{Ind}(G)\}$.
It strengthens the inequality $\left\vert \mathrm{core}(G)\right\vert
>\alpha(G)-\mu(G)$ \cite{BorosGolLev}, since $\max\{\left\vert I\right\vert
-\left\vert N(I)\right\vert :I\in\mathrm{Ind}(G)\}\geq\alpha(G)-\mu(G)$
\cite{Lorentzen1966,Schrijver2003}.

The number $d\left(  X\right)  =\left\vert X\right\vert -\left\vert N\left(
X\right)  \right\vert $ is called the difference of the set $X\subseteq
V\left(  G\right)  $, and $d_{c}(G)=\max\{d(X):X\subseteq V\left(  G\right)
\}$ is the \textit{critical difference} of $G$. A set $U\subseteq V(G)$ is
\textit{critical} if $d(U)=d_{c}(G)$ \cite{Zhang}. The number $id_{c}%
(G)=\max\{d(I):I\in\mathrm{Ind}(G)\}$ is called the \textit{critical
independence difference} of $G$. If $A\subseteq V(G)$ is independent and
$d(A)=id_{c}(G)$, then $A$ is called \textit{critical independent
}\cite{Zhang}.

For a graph $G$ let us denote $\mathrm{\ker}(G)=\cap\left\{  S:S\text{
\textit{is a critical independent set}}\right\}  $ and $\varepsilon
(G)=\left\vert \mathrm{\ker}(G)\right\vert $.

For instance, the graph $G_{1}$ in Figure \ref{fig3} has $\mathrm{\ker}\left(
G_{1}\right)  =\mathrm{core}(G_{1})=\{a,b\}$. The graph $G_{2}$ from Figure
\ref{fig3} has $X=\left\{  x,y,z,p,q\right\}  $ as a critical non-independent
set, because $d(X)=1=d_{c}(G_{2})$, while $\mathrm{\ker}\left(  G_{2}\right)
=\{x,y\}\subset\mathrm{core}(G_{2})=\{x,y,z\}$. The graph $G_{3}$ from Figure
\ref{fig3} has $\{t,u,v\}$\ as a critical set, $\mathrm{\ker}\left(
G_{3}\right)  =\{u,v\}$, while $\mathrm{core}(G_{3})=\left\{  t,u,v,w\right\}
$\ is not a critical set.

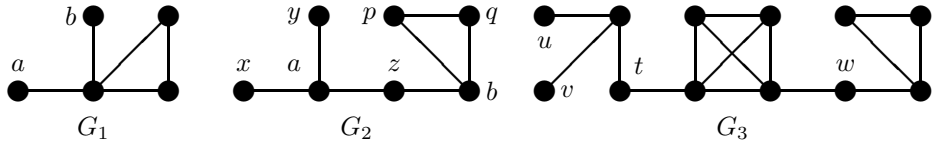
\begin{figure}[h]
\setlength{\unitlength}{1.0cm} \begin{picture}(5,1.9)\thicklines
\multiput(1,0.5)(1,0){3}{\circle*{0.29}}
\multiput(2,1.5)(1,0){2}{\circle*{0.29}}
\put(1,0.5){\line(1,0){2}}
\put(2,0.5){\line(0,1){1}}
\put(2,0.5){\line(1,1){1}}
\put(3,0.5){\line(0,1){1}}
\put(1,0.84){\makebox(0,0){$a$}}
\put(1.7,1.5){\makebox(0,0){$b$}}
\put(2,0){\makebox(0,0){$G_{1}$}}
\multiput(4,0.5)(1,0){4}{\circle*{0.29}}
\put(4,0.5){\line(1,0){3}}
\multiput(5,1.5)(1,0){3}{\circle*{0.29}}
\put(5,0.5){\line(0,1){1}}
\put(7,0.5){\line(0,1){1}}
\put(7,0.5){\line(-1,1){1}}
\put(6,1.5){\line(1,0){1}}
\put(6,0.84){\makebox(0,0){$z$}}
\put(4,0.84){\makebox(0,0){$x$}}
\put(4.67,1.5){\makebox(0,0){$y$}}
\put(4.67,0.84){\makebox(0,0){$a$}}
\put(5.67,1.5){\makebox(0,0){$p$}}
\put(7.3,1.5){\makebox(0,0){$q$}}
\put(7.3,0.5){\makebox(0,0){$b$}}
\put(5.5,0){\makebox(0,0){$G_{2}$}}
\multiput(8,0.5)(1,0){6}{\circle*{0.29}}
\multiput(8,1.5)(1,0){6}{\circle*{0.29}}
\put(8,0.5){\line(1,1){1}}
\put(8,1.5){\line(1,0){1}}
\put(9,0.5){\line(0,1){1}}
\put(9,0.5){\line(1,0){4}}
\put(10,0.5){\line(0,1){1}}
\put(10,0.5){\line(1,1){1}}
\put(10,1.5){\line(1,-1){1}}
\put(10,1.5){\line(1,0){1}}
\put(11,0.5){\line(0,1){1}}
\put(12,1.5){\line(1,-1){1}}
\put(12,1.5){\line(1,0){1}}
\put(13,0.5){\line(0,1){1}}
\put(8.3,0.5){\makebox(0,0){$v$}}
\put(8,1.15){\makebox(0,0){$u$}}
\put(9.25,0.83){\makebox(0,0){$t$}}
\put(12,0.84){\makebox(0,0){$w$}}
\put(10.5,0){\makebox(0,0){$G_{3}$}}
\end{picture}\caption{Non-quasi-regularizable graphs.}%
\label{fig3}%
\end{figure}

Clearly, $d_{c}(G)\geq id_{c}(G)$ is true for every graph $G$.

\begin{theorem}
\label{Theorem3}\cite{Zhang} The equality $d_{c}(G)$ $=id_{c}(G)$ holds for
every graph $G$.
\end{theorem}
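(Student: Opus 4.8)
The plan is to prove the nontrivial inequality $d_{c}(G)\leq id_{c}(G)$, since the reverse $d_{c}(G)\geq id_{c}(G)$ has already been recorded (every independent set is in particular a subset of $V(G)$). To get it, it suffices to produce, starting from an arbitrary critical set, an independent set whose difference is at least as large. So I would fix a set $X\subseteq V(G)$ with $d(X)=d_{c}(G)$ and define the candidate independent set $A=X\setminus N(X)$, obtained from $X$ by discarding exactly those vertices that have a neighbour inside $X$.

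The first step is to verify that $A$ really is independent. If two vertices $a,a'\in A$ were adjacent, then $a\in N(a')\subseteq N(X)$ (as $a'\in A\subseteq X$), contradicting $a\in X\setminus N(X)$. Hence $A\in\mathrm{Ind}(G)$ and $d(A)\leq id_{c}(G)$.

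The key step, and essentially the only real observation needed, is that $N(A)$ avoids $X$ altogether, so that $N(A)\subseteq N(X)\setminus X$. Indeed, suppose some $b\in X$ had a neighbour $a\in A\subseteq X$; then $a$ would be adjacent to a vertex of $X$, forcing $a\in N(X)$ and contradicting $a\in A=X\setminus N(X)$. With this in hand the difference of $A$ is compared to that of $X$: writing $t=\left\vert X\cap N(X)\right\vert$ for the number of deleted vertices, one has $\left\vert A\right\vert =\left\vert X\right\vert -t$, while $\left\vert N(A)\right\vert \leq\left\vert N(X)\setminus X\right\vert =\left\vert N(X)\right\vert -t$. Subtracting, the two copies of $t$ cancel and we obtain $d(A)=\left\vert A\right\vert -\left\vert N(A)\right\vert \geq\left\vert X\right\vert -\left\vert N(X)\right\vert =d(X)=d_{c}(G)$.

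Finally I would close the chain: $A$ being independent gives $d_{c}(G)\leq d(A)\leq id_{c}(G)$, and combined with $d_{c}(G)\geq id_{c}(G)$ this yields the claimed equality $d_{c}(G)=id_{c}(G)$. I expect the one subtle point to be the claim $N(A)\cap X=\emptyset$; once that is established, the bookkeeping term $t=\left\vert X\cap N(X)\right\vert$ cancels consistently on both sides and the comparison of differences is immediate. Note that no property of $X$ beyond $d(X)=d_{c}(G)$ is used, and the construction $X\mapsto X\setminus N(X)$ turns any critical set into a critical \emph{independent} set, which is what drives the result.
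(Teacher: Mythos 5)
Your proof is correct. Note that the paper itself offers no proof of this statement: it is imported as Theorem~\ref{Theorem3} with a citation to Zhang, so there is no internal argument to compare yours against. What you have written is the standard (and essentially Zhang's) argument, and every step checks out: the easy inequality $d_{c}(G)\geq id_{c}(G)$ holds because $\mathrm{Ind}(G)$ ranges over a subfamily of all vertex subsets; the set $A=X\setminus N(X)$ is independent, since an edge inside $A$ would put one endpoint in $N(X)$; the key claim $N(A)\cap X=\emptyset$ is verified exactly as you say, giving $N(A)\subseteq N(X)\setminus X$; and the bookkeeping with $t=\left\vert X\cap N(X)\right\vert$ cancels correctly, yielding $d(A)\geq d(X)=d_{c}(G)$. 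One harmless boundary case worth a half-sentence in a polished write-up: if $X\subseteq N(X)$ then $A=\emptyset$, which is fine provided the empty set is admitted as independent (as the definitions $d_{c}(G)=\max\{d(X):X\subseteq V(G)\}$ and $id_{c}(G)=\max\{d(I):I\in\mathrm{Ind}(G)\}$ implicitly do, both maxima then being at least $0$). Your closing remark is also the right way to view the construction: $X\mapsto X\setminus N(X)$ converts any critical set into a critical independent set without decreasing the difference, which is precisely what makes the two maxima coincide.
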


If $A\in\Omega(G[N[A]])$, then $A$ is called a \textit{local maximum
independent set} of $G$ \cite{LevMan2002b}.

It is easy to see that all pendant vertices are included in every maximum
critical independent set. It is known that the problem of finding a critical
independent set is polynomially solvable \cite{Ageev,Zhang}.

\begin{theorem}
\label{th8}\emph{(i) }\cite{NemhTro} Each local maximum independent set is
included in a maximum independent set.

\emph{(ii)} \cite{LevManKE2009} Every critical independent set is a local
maximum independent set.

\emph{(iii)} \cite{Butenko} Each critical independent set is contained in some
maximum independent set.

\emph{(iv)} \cite{Larson} There is a matching from $N(S)$ into $S$, for every
critical independent set $S$.
\end{theorem}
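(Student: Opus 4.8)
The four assertions are logically layered, so my plan is to prove them in the order (iv) $\Rightarrow$ (ii) $\Rightarrow$ (iii), with (i) handled separately, rather than to chase the four references independently. The engine is part (iv), and the only ingredients I expect to need are the definition of criticality together with Hall's marriage theorem; the supermodularity of the difference $d(X)=|X|-|N(X)|$ (which holds because $X\mapsto|N(X)|$ is submodular) would give an alternative route, but the Hall argument is shorter and more transparent.

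For (iv), let $S$ be a critical independent set; I want a matching saturating $N(S)$ with all edges going into $S$. By Hall's theorem it suffices to verify, for every $T\subseteq N(S)$, that $|N(T)\cap S|\ge|T|$. I would set $S_{T}=N(T)\cap S$ and test criticality against the independent set $S\setminus S_{T}\subseteq S$. The crucial containment is $N(S\setminus S_{T})\subseteq N(S)\setminus T$: indeed, a vertex of $T$ adjacent to $S\setminus S_{T}$ would force one of those $S$-vertices into $N(T)\cap S=S_{T}$, a contradiction. Feeding $|N(S\setminus S_{T})|\le|N(S)|-|T|$ and $|S\setminus S_{T}|=|S|-|S_{T}|$ into the inequality $d(S)\ge d(S\setminus S_{T})$ (valid since $S$ is critical) collapses to $|S_{T}|\ge|T|$, which is exactly Hall's condition.

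Granting (iv), part (ii) follows from Berge's characterization (Theorem \ref{th2}) applied to the induced subgraph $H=G[N[S]]$. Every independent set of $H$ disjoint from $S$ lives inside $V(H)\setminus S=N(S)$, and the matching produced in (iv) matches all of $N(S)$ injectively into $S$; restricting it matches any such set into $S$. Hence $S\in\Omega(H)$, i.e. $S$ is a local maximum independent set. Part (i) I would prove by a swap argument independent of the rest: given a local maximum independent set $A$ and any $S\in\Omega(G)$, the set $A\cup(S\setminus N[A])$ is independent, and since $S\cap N[A]$ is an independent set of $G[N[A]]$ we get $|S\cap N[A]|\le|A|$, so this new set has size at least $|S|=\alpha(G)$ and contains $A$. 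Finally (iii) is immediate: by (ii) a critical independent set is a local maximum independent set, and by (i) it extends to a maximum independent set.

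The main obstacle is entirely concentrated in (iv), specifically in pinning down the inclusion $N(S\setminus S_{T})\subseteq N(S)\setminus T$ that converts criticality into Hall's condition; once that containment is secured the remaining parts are bookkeeping. A secondary point to watch is that (ii) uses (iv) through Berge's theorem rather than directly, so I must confirm that the matching of (iv) is genuinely a matching of $H$, i.e. that its edges join $N(S)$ to $S$ with both endpoints inside $N[S]$, which they are by construction.
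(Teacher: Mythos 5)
Your proposal is correct in all four parts; I checked the Hall computation in (iv) — from $S_{T}=N(T)\cap S$ and the containment $N(S\setminus S_{T})\subseteq N(S)\setminus T$ one gets $d(S\setminus S_{T})\geq d(S)+|T|-|S_{T}|$, and criticality of $S$ among independent sets (the definition of $id_{c}$, with no need for Zhang's equality $d_{c}=id_{c}$) yields $|S_{T}|\geq|T|$ — as well as the exchange argument in (i), where $A\cup(S\setminus N[A])$ is independent of size at least $|S|$ because $S\cap N[A]\in\mathrm{Ind}(G[N[A]])$ forces $|S\cap N[A]|\leq|A|$. However, the paper itself proves none of these assertions: Theorem \ref{th8} is a compilation of four results quoted from the literature (Nemhauser--Trotter for (i), Levit--Mandrescu for (ii), Butenko--Trukhanov for (iii), Larson for (iv)), so there is no in-paper proof to compare against, and your contribution is a self-contained derivation the paper deliberately outsources. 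Your logical architecture is also leaner than the literature's: you make (iv) the engine, proved directly from criticality plus Hall, then obtain (ii) by feeding the matching of (iv) into Berge's characterization (Theorem \ref{th2}) restricted to $H=G[N[S]]$ (correctly observing that the matching edges join $N(S)$ to $S$ and hence lie in $H$), and you recover (iii) as the composition of (ii) with (i), whereas the cited papers establish (iii) and (iv) by independent arguments. This buys a single two-page proof of all four statements from first principles, at the cost of hiding the historical fact that (iii) was known before the local-maximum-independent-set machinery of (i) and (ii) was connected to critical sets; it is also consonant with how the surrounding paper uses the theorem, since the proof of Theorem \ref{Theorem1}\emph{(i)} combines exactly your (ii) and (iv) to get $\alpha(H)=|A|$ and $\mu(H)=|N(A)|$.
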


In this paper we prove that $\mathrm{\ker}(G)\subseteq$ $\mathrm{core}(G)$ and
$\varepsilon\left(  G\right)  \geq d_{c}\left(  G\right)  \geq\alpha\left(
G\right)  -\mu\left(  G\right)  $ hold for every graph $G$.

\section{Results}

\begin{theorem}
\label{Theorem1}Let $A$ be a critical independent set of the graph $G$ and
$X=A\cup N\left(  A\right)  $. Then the following assertions are true:

\emph{(i)} $H=G\left[  X\right]  $ is a K\"{o}nig-Egerv\'{a}ry graph;

\emph{(ii)} $\alpha\left(  G\left[  V-X\right]  \right)  \leq\mu\left(
G\left[  V-X\right]  \right)  $;

\emph{(iii)} $\mu\left(  G\left[  X\right]  \right)  +\mu\left(  G\left[
V-X\right]  \right)  =\mu\left(  G\right)  $; in particular, each maximum
matching of $G\left[  X\right]  $ can be enlarged to a maximum matching of $G$.
\end{theorem}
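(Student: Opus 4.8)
The plan is to handle the three parts in the order stated, first extracting the two structural facts $\alpha(H)=|A|$ and $\mu(H)=|N(A)|$, which drive everything.

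For (i): since $X=A\cup N(A)=N[A]$, Theorem \ref{th8}(ii) tells us $A$ is a local maximum independent set, i.e. $A\in\Omega(G[N[A]])=\Omega(H)$, so $\alpha(H)=|A|$. Theorem \ref{th8}(iv) supplies a matching of $N(A)$ into $A$; this is a matching of $H$ of size $|N(A)|$, whence $\mu(H)\geq|N(A)|$. On the other hand, the complement in $X$ of a maximum independent set of $H$ is a vertex cover of $H$, so by weak duality $\mu(H)\le|X|-\alpha(H)$; since $|X|=|A|+|N(A)|$ and $\alpha(H)=|A|$, this gives $\mu(H)\le|N(A)|$. Hence $\mu(H)=|N(A)|$ and $\alpha(H)+\mu(H)=|A|+|N(A)|=|X|$, i.e. $H$ is K\"{o}nig-Egerv\'{a}ry.

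For (ii): let $B$ be any independent set of $G[V-X]$. Because $N(A)\subseteq X$, no vertex of $A$ has a neighbor outside $X$; as $B\subseteq V-X$ there are no edges between $A$ and $B$, so $A\cup B$ is independent in $G$. By Theorem \ref{Theorem3} together with the criticality of $A$ we have $d(A)=id_{c}(G)=d_{c}(G)$, hence $d(A\cup B)\le d(A)$. Expanding with $N_{G}(A\cup B)=N(A)\cup N_{G}(B)$ and using that every neighbor of $B$ lying in $X$ must lie in $N(A)$ (it cannot lie in $A$), the inequality $d(A\cup B)\le d(A)$ reduces to $|B|\le|N_{G[V-X]}(B)|$, after identifying $N_{G}(B)\setminus N(A)$ with $N_{G[V-X]}(B)$. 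Thus every independent set of $G[V-X]$ expands inside $G[V-X]$; applying this to all subsets of a maximum independent set $B^{\ast}$ verifies Hall's condition for the bipartite graph between $B^{\ast}$ and $N_{G[V-X]}(B^{\ast})$, producing a matching of $G[V-X]$ saturating $B^{\ast}$. Therefore $\mu(G[V-X])\ge|B^{\ast}|=\alpha(G[V-X])$.

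For (iii): disjointness of $X$ and $V-X$ immediately gives $\mu(G)\ge\mu(G[X])+\mu(G[V-X])$. For the reverse inequality, take a maximum matching $M$ of $G$ and split it into edges inside $X$, edges inside $V-X$, and crossing edges. Since vertices of $A$ have no neighbor outside $X$, each crossing edge meets $X$ in a vertex of $N(A)$; and since $A$ is independent, each edge of $M$ inside $X$ also meets $N(A)$. These occurrences are at pairwise distinct vertices of $N(A)$, so the number of edges of $M$ inside $X$ plus the number of crossing edges is at most $|N(A)|=\mu(G[X])$, while the edges inside $V-X$ number at most $\mu(G[V-X])$. Summing yields $\mu(G)\le\mu(G[X])+\mu(G[V-X])$, so equality holds; in particular, juxtaposing a maximum matching of $G[X]$ with one of $G[V-X]$ produces a maximum matching of $G$, which proves the final claim. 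The routine steps are the matching bound in (i) and the decomposition count in (iii); the step I expect to require the most care is the difference computation in (ii) turning $d(A\cup B)\le d(A)$ into $|B|\le|N_{G[V-X]}(B)|$, since it hinges on correctly identifying $N_{G}(B)\setminus N(A)$ with $N_{G[V-X]}(B)$ and on the subsequent transfer of Hall's condition to every subset of $B^{\ast}$.
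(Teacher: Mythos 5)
Your proof is correct, and while part \emph{(i)} coincides with the paper's argument (Theorem \ref{th8}\emph{(ii)} gives $\alpha(H)=\left\vert A\right\vert$, Theorem \ref{th8}\emph{(iv)} gives the matching of $N(A)$ into $A$; you merely make explicit, via vertex-cover duality, the upper bound $\mu(H)\leq\left\vert N(A)\right\vert$ that the paper leaves implicit), parts \emph{(ii)} and \emph{(iii)} take a genuinely different route. For \emph{(ii)} the paper first embeds $A$ in some $S\in\Omega(G)$ using Theorem \ref{th8}\emph{(iii)}, establishes Hall's condition for $S-A$, matches $S-A$ into $V-S-N(A)$, and then separately proves $\alpha\left(G\left[V-X\right]\right)=\left\vert S-A\right\vert$ by contradiction; you bypass $S$ entirely and verify Hall's condition directly for an arbitrary maximum independent set of $G\left[V-X\right]$. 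Notably, your computation extracts exactly the inequality needed, $\left\vert B\right\vert\leq\left\vert N_{G}(B)\setminus N(A)\right\vert=\left\vert N_{G\left[V-X\right]}(B)\right\vert$, whereas the paper's displayed contradiction only yields the weaker $\left\vert B\right\vert\leq\left\vert N_{G}(B)\right\vert$, which by itself does not justify Hall's condition for a matching into $V-S-N(A)$, since neighbors of $B$ may fall in $N(A)$; so your version quietly repairs a small imprecision in the published argument. For \emph{(iii)} the paper fixes maximum matchings $M_{1}$, $M_{2}$ of the two parts and argues by exchange (via Theorem \ref{th5}, $M_{1}$ saturates $N(A)$, so adding a crossing edge costs an edge of $M_{1}$), which is somewhat informal; you instead bound an arbitrary maximum matching $M$ of $G$ by charging each edge of $M$ that meets $X$ to a distinct vertex of $N(A)$ and invoking $\mu\left(G\left[X\right]\right)=\left\vert N(A)\right\vert$ from part \emph{(i)} --- a tighter counting argument. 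What the paper's detour through $S$ buys is the byproduct $\alpha(G)=\alpha\left(G\left[X\right]\right)+\alpha\left(G\left[V-X\right]\right)$, which it reuses in Corollary \ref{cor1} and the Larson-type corollary; your proof is more self-contained and rigorous, but anyone following it would have to re-derive that identity where the paper needs it.
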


\begin{proof}
\emph{(i)} By Theorem \ref{th8}\emph{(ii)}, $A$ is a local maximum independent
set, which ensures that $\alpha(H)=\left\vert A\right\vert $, while Theorem
\ref{th8}\emph{(iv)} implies $\mu(H)=\left\vert N(A)\right\vert $.
Consequently, we get that
\[
\alpha(H)+\mu(H)=\left\vert A\cup N(A)\right\vert =\left\vert X\right\vert
=\left\vert V(H)\right\vert ,
\]
i.e., $H$ is a K\"{o}nig-Egerv\'{a}ry graph.

\emph{(ii) }According to Theorem \ref{th8}\emph{(iii)}, there exists a maximum
independent set $S$ such that $A\subseteq S$. Suppose that $\left\vert
B\right\vert >\left\vert N\left(  B\right)  \right\vert $ holds for some
$B\subseteq S-A$. Then, it follows that
\[
\left\vert A\right\vert -\left\vert N\left(  A\right)  \right\vert <\left(
\left\vert A\right\vert -\left\vert N\left(  A\right)  \right\vert \right)
+\left(  \left\vert B\right\vert -\left\vert N\left(  B\right)  \right\vert
\right)  \leq\left\vert A\cup B\right\vert -\left\vert N\left(  A\cup
B\right)  \right\vert ,
\]
which contradicts the hypothesis on $A$, namely, the fact that $\left\vert
A\right\vert -\left\vert N\left(  A\right)  \right\vert =d_{c}(G)$. Hence
$\left\vert B\right\vert \leq\left\vert N\left(  B\right)  \right\vert $ is
true for every $B\subseteq S-A$. Consequently, by Hall's Theorem there exists
a matching from $S-A$ into $V-S-N\left(  A\right)  $ that implies $\left\vert
S-A\right\vert \leq\mu\left(  G\left[  V-X\right]  \right)  $.

It remains to show that $\alpha\left(  G\left[  V-X\right]  \right)
=\left\vert S-A\right\vert $. By way of contradiction, assume that
\[
\alpha\left(  G\left[  V-X\right]  \right)  =\left\vert D\right\vert
>\left\vert S-A\right\vert
\]
for some independent set $D\subseteq V-X$. Since $D\cap N\left[  A\right]
=\emptyset$, the set $A\cup D$ is independent, and
\[
\left\vert A\cup D\right\vert =\left\vert A\right\vert +\left\vert
D\right\vert >\left\vert A\right\vert +\left\vert S-A\right\vert
=\alpha\left(  G\right)  ,
\]
which is impossible.

\emph{(iii) }Let $M_{1}$ be a maximum matching of $H$ and $M_{2}$ be a maximum
matching of $G\left[  V-X\right]  $. We claim that $M_{1}\cup M_{2}$ is a
maximum matching of $G$. \begin{figure}[h]
\setlength{\unitlength}{1.0cm} \begin{picture}(5,4)\thicklines
\put(6.7,1){\oval(11.7,1.5)}
\put(5,1){\oval(4.5,1)}
\put(5,3){\oval(4.5,1)}
\multiput(3.5,1)(1,0){2}{\circle*{0.29}}
\multiput(3.5,3)(1,0){2}{\circle*{0.29}}
\multiput(6.5,1)(0,2){2}{\circle*{0.29}}
\multiput(4.6,1)(0.35,0){5}{\circle*{0.1}}
\multiput(4.6,3)(0.35,0){5}{\circle*{0.1}}
\put(6.5,1){\line(0,1){2}}
\multiput(3.5,1)(1,0){2}{\line(0,1){2}}
\put(0.5,1){\makebox(0,0){$S$}}
\put(0.2,2.2){\makebox(0,0){$G$}}
\put(1.75,1){\makebox(0,0){$S-A$}}
\put(1.5,3){\makebox(0,0){$V-S-N(A)$}}
\put(10,1){\oval(3.3,1)}
\put(9.5,3){\oval(2.3,1)}
\multiput(9,1)(1,0){3}{\circle*{0.29}}
\multiput(9,3)(1,0){2}{\circle*{0.29}}
\put(9,1){\line(0,1){2}}
\put(10,1){\line(0,1){2}}
\put(12,1){\makebox(0,0){$A$}}
\put(11.2,3){\makebox(0,0){$N(A)$}}
\put(8.2,2.2){\makebox(0,0){$M_{1}$}}
\put(2.5,2.2){\makebox(0,0){$M_{2}$}}
\end{picture}\caption{$S\in\Omega(G)$ and $A$ is a critical independent set of
$G${.}}%
\label{fig44}%
\end{figure}
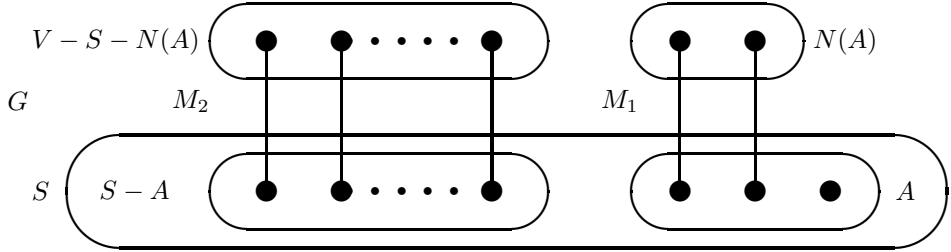

The only edges that may enlarge $M_{1}\cup M_{2}$ belong to the set $\left(
N\left(  A\right)  ,V-S-N\left(  A\right)  \right)  $. The matching $M_{1}$
covers all the vertices of $N\left(  A\right)  $ in accordance with Theorem
\ref{th5} and part \emph{(i)}. Therefore, to choose an edge from the set
$\left(  N\left(  A\right)  ,V-S-N\left(  A\right)  \right)  $ means to loose
an edge from $M_{1}$. In other words, no matching different from $M_{1}\cup
M_{2}$ may overstep $\left\vert M_{1}\cup M_{2}\right\vert $.

Consequently, each maximum matching of $G\left[  X\right]  $ can find its
counterpart in $G\left[  V-X\right]  $ in order to build a maximum matching of
$G$.
\end{proof}

Theorem \ref{Theorem1} allows us to give an alternative proof of the following
inequality due to Lorentzen.

\begin{corollary}
\label{cor1}\cite{Lorentzen1966}, \cite{Schrijver2003} The inequality
$d_{c}\left(  G\right)  \geq\alpha\left(  G\right)  -\mu\left(  G\right)  $
holds for every graph $G$.
\end{corollary}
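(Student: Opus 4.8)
The plan is to apply Theorem \ref{Theorem1} to a critical independent set that realizes the critical difference, and then to combine its three parts with the (sub)additivity of the independence number over a vertex partition. The key preliminary move is to pass from the unrestricted maximum $d_{c}(G)$ to an \emph{independent} maximizer: by Theorem \ref{Theorem3} we have $d_{c}(G)=id_{c}(G)$, so there exists a critical independent set $A$, i.e.\ an independent set with $\left\vert A\right\vert -\left\vert N(A)\right\vert =id_{c}(G)=d_{c}(G)$. Set $X=A\cup N(A)$. From the proof of Theorem \ref{Theorem1}\emph{(i)} we already know that $\alpha(G[X])=\left\vert A\right\vert$ and $\mu(G[X])=\left\vert N(A)\right\vert$, whence
\[
\alpha(G[X])-\mu(G[X])=\left\vert A\right\vert -\left\vert N(A)\right\vert =d_{c}(G).
\]

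Next I would estimate $\alpha(G)$ and $\mu(G)$ separately across the partition $V=X\cup(V-X)$. For the matching number, Theorem \ref{Theorem1}\emph{(iii)} supplies the exact splitting $\mu(G)=\mu(G[X])+\mu(G[V-X])$. For the independence number, any $S\in\Omega(G)$ decomposes as $S=(S\cap X)\cup(S\cap(V-X))$, with each piece independent in the corresponding induced subgraph, so that $\left\vert S\right\vert \leq\alpha(G[X])+\alpha(G[V-X])$, giving the bound $\alpha(G)\leq\alpha(G[X])+\alpha(G[V-X])$. It is worth noting that only this subadditive direction of $\alpha$ is available in general, but it is precisely the direction the argument needs.

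Subtracting the matching identity from the independence inequality yields
\[
\alpha(G)-\mu(G)\leq\bigl(\alpha(G[X])-\mu(G[X])\bigr)+\bigl(\alpha(G[V-X])-\mu(G[V-X])\bigr)=d_{c}(G)+\bigl(\alpha(G[V-X])-\mu(G[V-X])\bigr),
\]
and the last bracket is nonpositive by Theorem \ref{Theorem1}\emph{(ii)}. Hence $\alpha(G)-\mu(G)\leq d_{c}(G)$, which is the asserted inequality.

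I do not anticipate a genuine obstacle here, since once Theorem \ref{Theorem1} is in hand the corollary is essentially a bookkeeping step. The one point demanding care is the very first one: without invoking Theorem \ref{Theorem3} there is no guarantee that a \emph{critical independent} set attains the unrestricted critical difference $d_{c}(G)$, and the whole derivation hinges on equating $\alpha(G[X])-\mu(G[X])$ with $d_{c}(G)$ rather than merely with $id_{c}(G)$. After that, everything reduces to reading off $\alpha(G[X])$ and $\mu(G[X])$ from the K\"{o}nig--Egerv\'{a}ry structure of part \emph{(i)} and assembling items \emph{(ii)} and \emph{(iii)}.
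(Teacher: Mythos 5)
Your proposal is correct and follows essentially the same route as the paper's proof: both pass to a critical \emph{independent} set $A$ via Theorem \ref{Theorem3}, set $X=A\cup N\left(A\right)$, read off $\alpha\left(G\left[X\right]\right)-\mu\left(G\left[X\right]\right)=\left\vert A\right\vert-\left\vert N(A)\right\vert=d_{c}\left(G\right)$ from Theorem \ref{Theorem1}\emph{(i)}, and then combine parts \emph{(ii)} and \emph{(iii)} over the partition $V=X\cup\left(V-X\right)$. The only (harmless) deviation is that where the paper establishes the exact additivity $\alpha\left(G\right)=\alpha\left(G\left[X\right]\right)+\alpha\left(G\left[V-X\right]\right)$ by extending $A$ to some $S\in\Omega\left(G\right)$ via Theorem \ref{th8}, you use only the trivial subadditivity $\alpha\left(G\right)\leq\alpha\left(G\left[X\right]\right)+\alpha\left(G\left[V-X\right]\right)$, which indeed suffices for the inequality and slightly streamlines the bookkeeping.
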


\begin{proof}
Let $A$ be a critical independent set of $G$, and $X=A\cup N\left(  A\right)
$.

By Theorem \ref{Theorem1}\emph{(ii)}, we get $\alpha\left(  G\left[
V-X\right]  \right)  -\mu\left(  G\left[  V-X\right]  \right)  \leq0$. Hence
it follows that%
\[
\alpha\left(  G\left[  X\right]  \right)  -\mu\left(  G\left[  X\right]
\right)  \geq\left(  \alpha\left(  G\left[  X\right]  \right)  +\alpha\left(
G\left[  V-X\right]  \right)  \right)  -\left(  \mu\left(  G\left[  X\right]
\right)  +\mu\left(  G\left[  V-X\right]  \right)  \right)  .
\]

Theorem \ref{Theorem1}\emph{(iii) }claims that $\mu\left(  G\left[  X\right]
\right)  +\mu\left(  G\left[  V-X\right]  \right)  =\mu\left(  G\right)  $.

Since $A$ is a critical independent set, there exists some $S\in\Omega\left(
G\right)  $ such that $A\subseteq S$, and $\alpha\left(  G\left[  X\right]
\right)  =\left\vert A\right\vert $, by Theorem \ref{th8}\emph{(i)}. Hence we
have%
\[
\alpha\left(  G\left[  X\right]  \right)  +\alpha\left(  G\left[  V-X\right]
\right)  =\left\vert A\right\vert +\left\vert S-A\right\vert =\alpha\left(
G\right)  .
\]
In addition,\emph{ }Theorem \ref{Theorem1}\emph{(i)} and Theorem
\ref{th5}\emph{ }imply that $\mu\left(  G\left[  X\right]  \right)
=\left\vert N(A)\right\vert $.

Finally, we obtain
\begin{align*}
d_{c}\left(  G\right)   &  =\max\left\{  \left\vert I\right\vert -\left\vert
N(I)\right\vert :I\in\mathrm{Ind}(G)\right\}  =\left\vert A\right\vert
-\left\vert N(A)\right\vert =\\
&  =\alpha\left(  G\left[  X\right]  \right)  -\mu\left(  G\left[  X\right]
\right)  \geq\alpha\left(  G\right)  -\mu\left(  G\right)  ,
\end{align*}
and this completes the proof.
\end{proof}

Applying Theorem \ref{Theorem1} and Theorem \ref{th8}\emph{(iii) }we get the following.

\begin{corollary}
\cite{Larson2011} Let $J$ be a maximum critical independent set of $G$, and
$X=J\cup N(J)$. Then the following assertions are true:

\emph{(i)} $\alpha(G)=\alpha(G[X])+\alpha(G[V-X])$;

\emph{(ii)} $\alpha(G)=\alpha_{c}(G)+\alpha(G[V-X])$;

\emph{(iii)} $G[X]$ is a K\"{o}nig-Egerv\'{a}ry graph.
\end{corollary}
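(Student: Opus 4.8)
The plan is to derive all three parts of the corollary as essentially immediate consequences of Theorem~\ref{Theorem1} together with Theorem~\ref{th8}\emph{(iii)}, exploiting the special structure that arises when $J$ is a \emph{maximum} critical independent set. First I would observe that a maximum critical independent set is in particular a critical independent set, so Theorem~\ref{Theorem1} applies verbatim with $A=J$ and $X=J\cup N(J)$. Part \emph{(iii)} is then nothing more than Theorem~\ref{Theorem1}\emph{(i)}, and requires no extra argument.

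For part \emph{(i)}, the inequality $\alpha(G)\geq\alpha(G[X])+\alpha(G[V-X])$ is not automatic, but the reverse inequality $\alpha(G)\leq\alpha(G[X])+\alpha(G[V-X])$ always holds trivially, since any $S\in\Omega(G)$ splits as $S=(S\cap X)\sqcup(S\cap(V-X))$ with the two pieces independent in $G[X]$ and $G[V-X]$ respectively. So the real content is the matching lower bound. Here I would invoke Theorem~\ref{th8}\emph{(iii)}, which guarantees a maximum independent set $S$ with $J\subseteq S$; by the local-maximum property (Theorem~\ref{th8}\emph{(i)}, as used in Corollary~\ref{cor1}) we have $\alpha(G[X])=|J|$, while $S-J$ is an independent set contained in $V-X$, so $\alpha(G[V-X])\geq|S-J|$. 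Combining,
\[
\alpha(G[X])+\alpha(G[V-X])\geq|J|+|S-J|=|S|=\alpha(G),
\]
which upgrades the trivial inequality to the desired equality.

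For part \emph{(ii)}, I would simply record that $\alpha_{c}(G)$, the cardinality of a maximum critical independent set, equals $|J|=\alpha(G[X])$ by the same local-maximum identity $\alpha(G[X])=|J|$; substituting this into part \emph{(i)} gives $\alpha(G)=\alpha_{c}(G)+\alpha(G[V-X])$. The only genuine subtlety I anticipate is notational rather than mathematical: the symbol $\alpha_{c}(G)$ is not explicitly defined in the excerpt, so I would make clear at the outset that $\alpha_{c}(G)$ denotes the size of a maximum critical independent set, and then the identity $\alpha_{c}(G)=|J|=\alpha(G[X])$ is immediate from the choice of $J$ and Theorem~\ref{th8}\emph{(i)}. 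The hard part, such as it is, is recognizing that \emph{all three} claims collapse to the single fact $\alpha(G[X])=|J|$ combined with the additivity already packaged in Theorem~\ref{Theorem1}; once that observation is made, no new estimate is needed and the proof is a few lines.
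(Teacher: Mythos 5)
Your proposal is correct and takes essentially the same approach as the paper: the paper offers no written proof, saying only that the corollary follows by ``Applying Theorem \ref{Theorem1} and Theorem \ref{th8}\emph{(iii)}'', and your unpacking --- Theorem \ref{Theorem1}\emph{(i)} for part \emph{(iii)}, the identity $\alpha(G[X])=\left\vert J\right\vert $ together with the split of an $S\in\Omega(G)$ containing $J$ for part \emph{(i)}, and $\alpha_{c}(G)=\left\vert J\right\vert $ for part \emph{(ii)} --- is precisely the intended argument (indeed it mirrors the computation inside the paper's proof of Corollary \ref{cor1}). One cosmetic remark: the identity $\alpha(G[X])=\left\vert J\right\vert $ rests on Theorem \ref{th8}\emph{(ii)} (every critical independent set is a local maximum independent set), not on Theorem \ref{th8}\emph{(i)}, though the paper itself makes the same mis-citation in the proof of Corollary \ref{cor1}, so this is harmless.
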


The graph $G$ from Figure \ref{fig24} has $\mathrm{\ker}(G)=\left\{
a,b,c\right\}  $. Notice that $\mathrm{\ker}(G)\subseteq\mathrm{core}(G)$;
$S=\left\{  a,b,c,v\right\}  $ is a largest critical independent set, and
neither $S\subseteq\mathrm{core}(G)$ nor $\mathrm{core}(G)\subseteq S$. In
addition, $\mathrm{core}(G)$ is not a critical independent set of $G$.
\begin{figure}[h]
\setlength{\unitlength}{1cm}\begin{picture}(5,1.2)\thicklines
\multiput(3,0)(1,0){9}{\circle*{0.29}}
\multiput(3,1)(1,0){4}{\circle*{0.29}}
\multiput(9,1)(2,0){2}{\circle*{0.29}}
\put(3,0){\line(1,0){8}}
\put(4,0){\line(0,1){1}}
\put(3,1){\line(1,-1){1}}
\put(5,0){\line(0,1){1}}
\put(5,1){\line(1,0){1}}
\put(6,1){\line(1,-1){1}}
\put(9,0){\line(0,1){1}}
\put(9,1){\line(1,-1){1}}
\put(11,0){\line(0,1){1}}
\put(2.7,0){\makebox(0,0){$a$}}
\put(2.7,1){\makebox(0,0){$b$}}
\put(4.3,1){\makebox(0,0){$c$}}
\put(8,0.3){\makebox(0,0){$u$}}
\put(11.3,1){\makebox(0,0){$v$}}
\put(1.8,0.5){\makebox(0,0){$G$}}
\end{picture}\caption{$G$ is a non-quasi-regularizable graph with
\textrm{core}$(G)=\{a,b,c,u\}$.}%
\label{fig24}%
\end{figure}
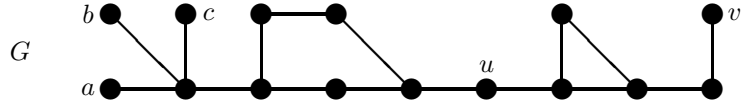

\begin{theorem}
\label{th77}For a graph $G=(V,E)$ of order $n$, the following assertions are true:

\emph{(i)} the function $d$ is supermodular, i.e., $d(A\cup B)+d(A\cap B)\geq
d(A)+d(B)$ for every $A,B\subseteq V(G)$;

\emph{(ii)} if $A$ and $B$ are critical in $G$, then $A\cup B$ and $A\cap B$
are critical as well;

\emph{(iii)} $\mathrm{\ker}(G)=\cap\left\{  B:B\text{ is a critical set of
}G\right\}  $.
\end{theorem}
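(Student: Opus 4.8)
The plan is to prove the three parts in the order stated, since each feeds into the next. For part~(i), I would start directly from the definition $d(X)=\left\vert X\right\vert -\left\vert N(X)\right\vert$ and observe that the cardinality term is modular, i.e. $\left\vert A\cup B\right\vert +\left\vert A\cap B\right\vert =\left\vert A\right\vert +\left\vert B\right\vert$ holds with equality. Thus the entire inequality reduces to showing that the neighborhood function $N$ is submodular, meaning $\left\vert N(A\cup B)\right\vert +\left\vert N(A\cap B)\right\vert \leq\left\vert N(A)\right\vert +\left\vert N(B)\right\vert$. The key set-theoretic facts are $N(A\cup B)=N(A)\cup N(B)$ and $N(A\cap B)\subseteq N(A)\cap N(B)$; combining the modular identity for $N(A)\cup N(B)$ with this containment yields the claim. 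This is the technical heart, but it is a short and standard counting argument.

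For part~(ii), I would feed two critical sets $A$ and $B$ into the supermodular inequality from~(i). Since $A$ and $B$ are critical, $d(A)=d(B)=d_{c}(G)$, so~(i) gives $d(A\cup B)+d(A\cap B)\geq 2d_{c}(G)$. On the other hand, by the very definition of the critical difference, $d(A\cup B)\leq d_{c}(G)$ and $d(A\cap B)\leq d_{c}(G)$. These two bounds force both $d(A\cup B)=d_{c}(G)$ and $d(A\cap B)=d_{c}(G)$, so both $A\cup B$ and $A\cap B$ are critical. This is a clean squeeze argument and should present no obstacle.

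For part~(iii), the goal is to show the two descriptions of $\mathrm{\ker}(G)$ coincide: the intersection over all critical \emph{independent} sets equals the intersection over all critical sets (independent or not). I would prove the two inclusions. One direction is immediate: every critical independent set is in particular a critical set, so intersecting over the larger family of critical sets can only shrink the result, giving $\cap\{B:B\text{ critical}\}\subseteq\mathrm{\ker}(G)$. The reverse inclusion is where the real work lies, and I expect it to be the main obstacle. Here I would invoke Theorem~\ref{Theorem3}, which gives $d_{c}(G)=id_{c}(G)$, so that critical independent sets actually exist and attain the same difference $d_{c}(G)$ as arbitrary critical sets.

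The crux of the reverse inclusion is to show that an arbitrary critical set $B$ contains, or at least cannot exclude from $\mathrm{\ker}(G)$, the common vertices of all critical independent sets. My plan is to associate to each critical set $B$ a critical \emph{independent} set whose intersection with $\mathrm{\ker}(G)$ behaves well, so that any vertex lying in every critical independent set must also lie in $B$. A natural route is to take the independent vertices of $B$, or more carefully to extract from $B$ a critical independent set using the equality $d_{c}(G)=id_{c}(G)$ together with the lattice closure properties from part~(ii): since critical sets are closed under intersection, one can intersect a given critical set $B$ with a suitable critical independent set and exploit that the result is again critical. The hard part will be verifying that this extracted independent set is genuinely critical and that the extraction does not remove any vertex of $\mathrm{\ker}(G)$; I would lean on Theorem~\ref{th8}, particularly the matching properties of critical independent sets, to control which vertices survive. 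Once both inclusions are established, the two intersections coincide and~(iii) follows.
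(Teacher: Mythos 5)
Your parts (i) and (ii) are correct and coincide with the paper's proof: the same identities $N(A\cup B)=N(A)\cup N(B)$ and $N(A\cap B)\subseteq N(A)\cap N(B)$ drive the supermodularity computation, and the same squeeze via $d(A\cup B)\leq d_{c}(G)$, $d(A\cap B)\leq d_{c}(G)$ gives (ii). For (iii) your easy inclusion is right, and the route you sketch for the reverse inclusion is the correct one, but you stop short where the argument in fact closes instantly: the ``hard part'' you flag is not hard at all, given your own part (ii). Concretely, let $B$ be any critical set and let $J$ be any critical independent set (one exists by Theorem \ref{Theorem3}, since $d_{c}(G)=id_{c}(G)$). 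By part (ii) the set $B\cap J$ is critical, and it is independent as a subset of $J$; hence $B\cap J$ is a critical independent set, and therefore $\mathrm{\ker}(G)\subseteq B\cap J\subseteq B$ by the very definition of $\mathrm{\ker}(G)$ as the intersection of all critical independent sets. So your worry that ``the extraction does not remove any vertex of $\mathrm{\ker}(G)$'' dissolves automatically, and no appeal to the matching properties of Theorem \ref{th8} is needed. The paper runs the same idea globally rather than per-set: it forms $Q_{c}=\cap\{B:B\text{ critical}\}$, observes that $Q_{c}$ is critical by (ii) and independent because it is contained in some critical independent set (again Theorem \ref{Theorem3}), concludes that $Q_{c}$ is itself a critical independent set, and hence $\mathrm{\ker}(G)\subseteq Q_{c}$. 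Your pairwise version is, if anything, slightly cleaner, since it avoids the implicit extension of the pairwise closure in (ii) to the intersection of the whole (finite) family of critical sets; but in substance the two arguments are the same, resting on exactly the two ingredients you identified, namely Theorem \ref{Theorem3} and part (ii).
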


\begin{proof}
\emph{(i)} Let us notice that $N(A\cup B)=N(A)\cup N(B)$ and $N(A\cap
B)\subseteq N(A)\cap N(B)$. Further, we obtain%
\begin{align*}
d(A\cup B)  &  =\left\vert A\cup B\right\vert -\left\vert N(A\cup
B)\right\vert =\left\vert A\cup B\right\vert -\left\vert N(A)\cup
N(B)\right\vert =\\
&  =\left\vert A\right\vert +\left\vert B\right\vert -\left\vert A\cap
B\right\vert -\left\vert N(A)\right\vert -\left\vert N(B)\right\vert
+\left\vert N(A)\cap N(B)\right\vert =\\
&  =\left(  \left\vert A\right\vert -\left\vert N(A)\right\vert \right)
+\left(  \left\vert B\right\vert -\left\vert N(B)\right\vert \right)
+\left\vert N(A)\cap N(B)\right\vert -\left\vert A\cap B\right\vert =\\
&  =d(A)+d(B)-\left(  \left\vert A\cap B\right\vert -\left\vert N(A\cap
B)\right\vert \right)  +\left\vert N(A)\cap N(B)\right\vert -\left\vert
N(A\cap B)\right\vert =\\
&  =d(A)+d(B)-d(A\cap B)+\left\vert N(A)\cap N(B)\right\vert -\left\vert
N(A\cap B)\right\vert \geq\\
&  \geq d(A)+d(B)-d(A\cap B).
\end{align*}

\emph{(ii)} By part \emph{(i)}, we have that
\[
d(A\cup B)+d(A\cap B)\geq d(A)+d(B)=2d_{c}(G).
\]
Consequently, we get that $d(A\cup B)=d(A\cap B)=d_{c}(G)$, i.e., both $A\cup
B$ and $A\cap B$ are critical sets.

\emph{(iii) }Let $\Gamma_{ci}$ be the family of all critical independent sets
of $G$, while $\Gamma_{c}$ denotes the family $\left\{  B:B\text{ \textit{is a
critical set in }}G\right\}  $.

By part \emph{(ii)}, both sets
\[
\mathrm{\ker}(G)=\cap\left\{  S:S\in\Gamma_{ic}\right\}  \text{ and }%
Q_{c}=\cap\left\{  B:B\in\Gamma_{c}\right\}
\]
are critical. Theorem \ref{Theorem3} implies that $\Gamma_{ci}\subseteq
\Gamma_{c}$, and therefore, $Q_{c}\subseteq\mathrm{\ker}(G)$. On the other
hand, $Q_{c}$ is independent, because by Theorem \ref{Theorem3}, one of the
critical sets from $\Gamma_{c}$ is independent. Consequently, we obtain
$\mathrm{\ker}(G)\subseteq Q_{c}$, and this completes the proof.
\end{proof}

\begin{theorem}
\label{th7}For a graph $G=(V,E)$ of order $n$, the following assertions are true:

\emph{(i)} $V\supseteq\mathrm{corona}(G)\supseteq S\supseteq\mathrm{core}%
(G)\supseteq\mathrm{\ker}(G)$, for every $S\in\Omega(G)$;

\emph{(ii)} $n\geq\zeta\left(  G\right)  \geq\alpha\left(  G\right)  \geq
\xi\left(  G\right)  \geq\varepsilon\left(  G\right)  \geq d_{c}\left(
G\right)  \geq\alpha\left(  G\right)  -\mu\left(  G\right)  $;

\emph{(iii) }$\xi\left(  G\right)  \geq\alpha\left(  G\right)  -\mu\left(
G\right)  +\varepsilon\left(  G\right)  -d_{c}\left(  G\right)  $.
\end{theorem}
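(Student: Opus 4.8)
The plan is to dispose of the routine containments and inequalities first and then concentrate on the single substantive point, namely the inclusion $\ker(G)\subseteq\mathrm{core}(G)$, from which everything else follows. In (i) the chain $V\supseteq\mathrm{corona}(G)\supseteq S\supseteq\mathrm{core}(G)$ is immediate from the definitions: $\mathrm{corona}(G)$ is a union of subsets of $V$ that has every $S\in\Omega(G)$ as a member, while $\mathrm{core}(G)$ is the intersection of all such $S$. Passing to cardinalities turns these into the outer inequalities of (ii), $n\geq\zeta(G)\geq\alpha(G)\geq\xi(G)$, and the inequality $d_{c}(G)\geq\alpha(G)-\mu(G)$ is exactly Corollary \ref{cor1}. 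Thus (i) reduces to proving $\mathrm{core}(G)\supseteq\ker(G)$, and (ii) reduces to proving $\xi(G)\geq\varepsilon(G)$ (the cardinality form of that same inclusion) together with $\varepsilon(G)\geq d_{c}(G)$.

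The heart of the argument is the following claim: \emph{if $A$ is a critical independent set and $S\in\Omega(G)$, then $A\cap S$ is again a critical independent set.} To prove it I would set $X=A\cup N(A)$ and invoke Theorem \ref{Theorem1}: $G[X]$ is a K\"{o}nig-Egerv\'{a}ry graph with $\alpha(G[X])=|A|$ and $\mu(G[X])=|N(A)|$, and $\alpha(G)=\alpha(G[X])+\alpha(G[V-X])$. Since $S\cap X$ and $S\cap(V-X)$ are independent in the respective subgraphs, additivity of $\alpha$ forces $S\cap X\in\Omega(G[X])$. Applying Theorem \ref{th5} to $G[X]$ with the maximum independent set $S\cap X$ gives $\mu(G[X])=|X\setminus S|$, hence $|A\setminus S|+|N(A)\setminus S|=|N(A)|$. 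Because $A\cap S\subseteq S$ and $S$ is independent, $N(A\cap S)\subseteq N(A)\setminus S$, so
\[
d(A\cap S)=|A\cap S|-|N(A\cap S)|\geq\bigl(|A|-|A\setminus S|\bigr)-|N(A)\setminus S|=|A|-|N(A)|=d_{c}(G).
\]
Maximality of $d_{c}(G)$ then forces equality, so $A\cap S$ is critical, proving the claim.

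With the claim in hand I would take $A=\ker(G)$, which is a critical independent set by Theorem \ref{th77}. For every $S\in\Omega(G)$ the set $\ker(G)\cap S$ is critical independent, whence $\ker(G)\subseteq\ker(G)\cap S\subseteq S$; intersecting over all $S\in\Omega(G)$ yields $\ker(G)\subseteq\mathrm{core}(G)$, which finishes (i) and gives $\xi(G)\geq\varepsilon(G)$. For $\varepsilon(G)\geq d_{c}(G)$ I would again use that $\ker(G)$ is critical independent, so that $\varepsilon(G)=|\ker(G)|=d_{c}(G)+|N(\ker(G))|\geq d_{c}(G)$, completing the chain in (ii).

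Finally, (iii) is a formal consequence of two nonnegative quantities already produced: $\xi(G)-\varepsilon(G)\geq 0$ from $\ker(G)\subseteq\mathrm{core}(G)$, and $d_{c}(G)-\bigl(\alpha(G)-\mu(G)\bigr)\geq 0$ from Corollary \ref{cor1}; adding them gives $\xi(G)\geq\bigl(\alpha(G)-\mu(G)\bigr)+\varepsilon(G)-d_{c}(G)$. The main obstacle is the claim of the second paragraph — specifically verifying that $S\cap X$ is a \emph{maximum} independent set of $G[X]$ and accounting correctly for $N(A\cap S)$ so that $d(A\cap S)$ meets the bound $d_{c}(G)$. Once that bookkeeping is in place, every remaining assertion is either definitional or a citation of an earlier result.
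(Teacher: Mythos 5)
Your proof is correct, and its overall skeleton matches the paper's (reduce everything to $\ker(G)\subseteq\mathrm{core}(G)$ plus $\varepsilon(G)\geq d_{c}(G)$ plus Corollary \ref{cor1}, then get $\ker(G)\subseteq S$ by showing the intersection with $S$ is critical and invoking minimality of $\ker(G)$), but the decisive step is carried out by a genuinely different mechanism. The paper works directly and only with $\ker(G)$: it takes the matching $M_{L}$ from $N(\ker(G))$ into $\ker(G)$ given by Theorem \ref{th8}\emph{(iv)}, uses Berge's Theorem \ref{th2} to match $\ker(G)-A_{1}$ (where $A_{1}=\ker(G)\cap S$) into $S-A_{1}$, and extracts $\left\vert \ker(G)\right\vert -\left\vert N(\ker(G))\right\vert =\left\vert A_{1}\right\vert -\left\vert N(A_{1})\right\vert $ from a counting contradiction. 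You instead prove the stronger, reusable lemma that $A\cap S$ is critical for \emph{every} critical independent set $A$ and every $S\in\Omega(G)$, routing through the K\"{o}nig-Egerv\'{a}ry machinery: with $X=A\cup N(A)$, the facts $\alpha(G[X])=\left\vert A\right\vert $, $\mu(G[X])=\left\vert N(A)\right\vert $ and $\alpha(G)=\alpha(G[X])+\alpha(G[V-X])$ force $S\cap X\in\Omega(G[X])$, Theorem \ref{th5} then gives $\left\vert A\setminus S\right\vert +\left\vert N(A)\setminus S\right\vert =\left\vert N(A)\right\vert $, and $N(A\cap S)\subseteq N(A)\setminus S$ closes the estimate $d(A\cap S)\geq d_{c}(G)$; every step checks out. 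Two bookkeeping remarks: the identities $\alpha(G[X])=\left\vert A\right\vert $, $\mu(G[X])=\left\vert N(A)\right\vert $ and the additivity of $\alpha$ are established inside the \emph{proofs} of Theorem \ref{Theorem1} and Corollary \ref{cor1} rather than in their statements (the displayed Corollary asserts additivity only for a maximum critical independent set), and criticality of $\ker(G)$ needs a finite induction on Theorem \ref{th77}\emph{(ii)} — the paper makes the same leap. What your route buys is a cleaner and more general statement (the trace of any critical independent set on any maximum independent set is again critical) obtained by reusing already-developed structure and avoiding Theorem \ref{th2} entirely; the paper's two-matching argument is more elementary and self-contained at the level of matchings, but is tailored to $\ker(G)$ alone. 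The remaining parts — the containment chain, $\varepsilon(G)=d_{c}(G)+\left\vert N(\ker(G))\right\vert \geq d_{c}(G)$, and the formal derivation of \emph{(iii)} from \emph{(ii)} — coincide with the paper's.
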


\begin{proof}
\emph{(i)} Clearly, $\mathrm{core}(G)\subseteq S\subseteq\mathrm{corona}%
(G)\subseteq V$ hold for each $S\in\Omega\left(  G\right)  $. The set
$\mathrm{\ker}(G)$ is independent by definition. According to Theorem
\ref{th77}\emph{(ii)}, $\mathrm{\ker}(G)$ is critical. Consequently, by
Theorem \ref{th8}\emph{(iv)}, there exists a matching $M_{L}$ from
$N(\mathrm{\ker}(G))$ into $\mathrm{\ker}(G)$. Figure \ref{fig123} will
accompany us all the way to the end of the proof.\begin{figure}[h]
\setlength{\unitlength}{1.0cm} \begin{picture}(5,3.5)\thicklines
\put(6.7,1){\oval(11.7,1.5)}
\put(4.5,1){\oval(5,1)}
\put(4.5,3){\oval(5,1)}
\multiput(2.5,1)(1,0){2}{\circle*{0.29}}
\multiput(2.5,3)(1,0){2}{\circle*{0.29}}
\multiput(5.5,1)(0,2){2}{\circle*{0.29}}
\multiput(3.6,1)(0.35,0){5}{\circle*{0.1}}
\multiput(3.6,3)(0.35,0){5}{\circle*{0.1}}
\multiput(6.4,3)(0.3,0){5}{\line(1,0){0.15}}
\multiput(5.5,3)(0.5,-0.5){4}{\line(1,-1){0.4}}
\put(7.5,1){\circle*{0.29}}
\put(8,3){\circle*{0.29}}
\put(8,2.6){\makebox(0,0){$u$}}
\put(7.5,0.6){\makebox(0,0){$v$}}
\put(5.5,1){\line(0,1){2}}
\multiput(2.5,1)(1,0){2}{\line(0,1){2}}
\put(6.5,1){\line(0,1){2}}
\multiput(6.5,1)(0,2){2}{\circle*{0.29}}
\put(0.5,1){\makebox(0,0){$S$}}
\put(0.2,2){\makebox(0,0){$G$}}
\put(1.5,1){\makebox(0,0){$A_{2}$}}
\put(1,3){\makebox(0,0){$ker(G)-A_{1}$}}
\put(10,1){\oval(3.3,1)}
\put(10.5,3){\oval(2.3,1)}
\multiput(9,1)(1,0){3}{\circle*{0.29}}
\multiput(10,3)(1,0){2}{\circle*{0.29}}
\put(9,1){\line(1,2){1}}
\put(10,1){\line(1,2){1}}
\put(12,1){\makebox(0,0){$A_{1}$}}
\put(12.3,3){\makebox(0,0){$N(A_{1})$}}
\put(2.1,2.3){\makebox(0,0){$M_{B}$}}
\put(9.2,2.3){\makebox(0,0){$M_{L}$}}
\end{picture}\caption{$S\in\Omega(G)$, $\mathrm{\ker}(G)$, and $A_{1}%
=S\cap\mathrm{\ker}(G)${.}}%
\label{fig123}%
\end{figure}
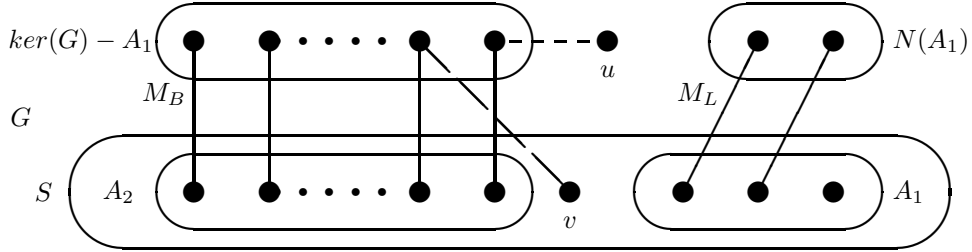

Let $S\in\Omega(G)$, and $A_{1}=\mathrm{\ker}(G)\cap S$. Since $\mathrm{\ker
}(G)-A_{1}$ is stable and disjoint from $S$, Theorem \ref{th2} ensures that
there is a matching $M_{B}$ from $\mathrm{\ker}(G)-A_{1}$ into $S$, covering
some subset $A_{2}$ of $S-A_{1}$ Let $S\in\Omega(G)$, and $A_{1}=\mathrm{\ker
}(G)\cap S$. Since $\mathrm{\ker}(G)-A_{1}$ is stable and disjoint from $S$,
Theorem \ref{th2} ensures that there is a matching $M_{B}$ from $\mathrm{\ker
}(G)-A_{1}$ into $S$, covering some subset $A_{2}$ of $S-A_{1}$. Clearly, we
have
\[
\left\vert \mathrm{\ker}(G)-A_{1}\right\vert =\left\vert A_{2}\right\vert
\text{, }A_{1}\cap A_{2}=\emptyset\text{, and }A_{2}\subseteq N(\mathrm{\ker
}(G)-A_{1})\cap S.
\]

Assume that there is some $v\in\left(  N(\mathrm{\ker}(G)-A_{1})\cap S\right)
-A_{2}$. The vertex $v$ must be matched with some vertex from $\mathrm{\ker
}(G)-A_{1}$ by $M_{L}$, because $\{v\}\cup A_{1}\subseteq S$. Hence $M_{L}$
matches the set $N(\mathrm{\ker}(G)-A_{1})\cap S$ into $\mathrm{\ker}%
(G)-A_{1}$, which is impossible, since
\[
\left\vert N(\mathrm{\ker}(G)-A_{1})\cap S\right\vert \geq\left\vert \{v\}\cup
A_{2}\right\vert >\left\vert A_{2}\right\vert =\left\vert \mathrm{\ker
}(G)-A_{1}\right\vert .
\]
Consequently, we get that $N(\mathrm{\ker}(G)-A_{1})\cap S=A_{2}$. Thus
$M_{L}$ matches the set $N(\mathrm{\ker}(G)-A_{1})\cap S$ onto $\mathrm{\ker
}(G)-A_{1}$, and $N(A_{1})$ into $A_{1}$. Clearly, we have
\[
\left\vert \mathrm{\ker}(G)-A_{1}\right\vert =\left\vert A_{2}\right\vert
\text{, }A_{1}\cap A_{2}=\emptyset\text{, and }A_{2}\subseteq N(\mathrm{\ker
}(G)-A_{1})\cap S.
\]

Assume that there is some $v\in\left(  N(\mathrm{\ker}(G)-A_{1})\cap S\right)
-A_{2}$. The vertex $v$ must be matched with some vertex from $\mathrm{\ker
}(G)-A_{1}$ by $M_{L}$, because $\{v\}\cup A_{1}\subseteq S$. Hence $M_{L}$
matches the set $N(\mathrm{\ker}(G)-A_{1})\cap S$ into $\mathrm{\ker}%
(G)-A_{1}$, which is impossible, since
\[
\left\vert N(\mathrm{\ker}(G)-A_{1})\cap S\right\vert \geq\left\vert \{v\}\cup
A_{2}\right\vert >\left\vert A_{2}\right\vert =\left\vert \mathrm{\ker
}(G)-A_{1}\right\vert .
\]
Consequently, we get that $N(\mathrm{\ker}(G)-A_{1})\cap S=A_{2}$. Thus
$M_{L}$ matches the set $N(\mathrm{\ker}(G)-A_{1})\cap S$ onto $\mathrm{\ker
}(G)-A_{1}$, and $N(A_{1})$ into $A_{1}$.

In conclusion, we may assert that $\left\vert \mathrm{\ker}(G)\right\vert
-\left\vert N(\mathrm{\ker}(G))\right\vert =\left\vert A_{1}\right\vert
-\left\vert N(A_{1})\right\vert $. Hence, we infer that $\mathrm{\ker
}(G)-A_{1}=\emptyset$, otherwise we have that $A_{1}$ is a critical
independent set of $G$ with $\left\vert A_{1}\right\vert <\left\vert
\mathrm{\ker}(G)\right\vert $, in contradiction with the hypothesis on
minimality of $\mathrm{\ker}(G)$. This ensures that $\mathrm{\ker}(G)\subseteq
S$ for every $S\in\Omega(G)$, which means that $\mathrm{\ker}(G)\subseteq
\mathrm{core}(G)$.

\emph{(ii)} Using part \emph{(i)}, Theorem \ref{th77}\emph{(iii)}, and
Corollary \ref{cor1}, we deduce that
\begin{gather*}
n\geq\zeta\left(  G\right)  \geq\alpha\left(  G\right)  \geq\xi\left(
G\right)  \geq\varepsilon(G)=\\
=\left\vert \mathrm{\ker}(G)\right\vert \geq\left\vert \mathrm{\ker
}(G)\right\vert -\left\vert N\left(  \mathrm{\ker}(G)\right)  \right\vert
=d_{c}(G)\geq\alpha(G)-\mu(G),
\end{gather*}
which completes the proof.

\emph{(iii) }It follows immediately from part \emph{(ii)}.
\end{proof}

Notice that $\xi(K_{2,3})=\varepsilon(K_{2,3})>d_{c}(K_{2,3})=1=\alpha
(K_{2,3})-\mu(K_{2,3})$, while the graph $G_{2}$ is from Figure \ref{fig3}
satisfies $\xi(G_{2})>\varepsilon(G_{2})>d(G_{2})=1$.

\begin{corollary}
If $d_{c}\left(  G\right)  >0$ or, equivalently, $G$ is a
non-quasi-regularizable graph,$\ $then

\emph{(i) }$n\geq\zeta\left(  G\right)  \geq\alpha\left(  G\right)  \geq
\xi\left(  G\right)  \geq\varepsilon\left(  G\right)  >d_{c}\left(  G\right)
\geq\alpha\left(  G\right)  -\mu\left(  G\right)  \geq1$;

\emph{(ii) }$\xi\left(  G\right)  >\alpha\left(  G\right)  -\mu\left(
G\right)  +\varepsilon\left(  G\right)  -d_{c}\left(  G\right)  .$
\end{corollary}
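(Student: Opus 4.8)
The plan is to obtain both parts from Theorem \ref{th7} by adding only an integrality remark and one strictness observation. First I would dispose of the hypothesis: by Theorem \ref{th11}, $G$ is quasi-regularizable precisely when $|S|\le|N(S)|$ for every independent set $S$, i.e. when $id_{c}(G)\le 0$; since Theorem \ref{Theorem3} gives $d_{c}(G)=id_{c}(G)$, non-quasi-regularizability is exactly $d_{c}(G)>0$. As $d(X)=|X|-|N(X)|$ takes integer values, $d_{c}(G)>0$ upgrades at once to $d_{c}(G)\ge 1$.

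For part (i) I would import the full non-strict chain $n\ge\zeta(G)\ge\alpha(G)\ge\xi(G)\ge\varepsilon(G)\ge d_{c}(G)\ge\alpha(G)-\mu(G)$ directly from Theorem \ref{th7}(ii), leaving two things to be supplied: the strictness of $\varepsilon(G)\ge d_{c}(G)$ and the endpoint $\alpha(G)-\mu(G)\ge 1$. The first is clean, and I would reuse the machinery from the proof of Theorem \ref{th7}(i): by Theorem \ref{th77}(ii) the set $\mathrm{\ker}(G)$ is itself critical, so $d_{c}(G)=|\mathrm{\ker}(G)|-|N(\mathrm{\ker}(G))|=\varepsilon(G)-|N(\mathrm{\ker}(G))|$. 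Since $d_{c}(G)\ge 1$ forces $\mathrm{\ker}(G)\neq\emptyset$ and $G$ has no isolated vertices, the independent set $\mathrm{\ker}(G)$ has at least one neighbour, so $|N(\mathrm{\ker}(G))|\ge 1$ and therefore $\varepsilon(G)=d_{c}(G)+|N(\mathrm{\ker}(G))|\ge d_{c}(G)+1>d_{c}(G)$. This is the self-contained core of the argument.

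For part (ii) I would start from the identity underlying Theorem \ref{th7}(iii), namely that $\xi(G)\ge(\alpha(G)-\mu(G))+(\varepsilon(G)-d_{c}(G))$ is nothing but the sum of the two non-strict inequalities $\xi(G)\ge\varepsilon(G)$ and $d_{c}(G)\ge\alpha(G)-\mu(G)$. Consequently the desired strict form would hold as soon as at least one of these two summands is strict.

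The step I expect to be the main obstacle is exactly the strictness at the two ends of the chain: the endpoint $\alpha(G)-\mu(G)\ge 1$ in (i) and the strict inequality in (ii). Note that the already-established $\varepsilon(G)>d_{c}(G)$ does \emph{not} help with (ii), because the term $\varepsilon(G)-d_{c}(G)$ appears verbatim on the right-hand side of Theorem \ref{th7}(iii); so (ii) genuinely requires $\xi(G)>\varepsilon(G)$ or $d_{c}(G)>\alpha(G)-\mu(G)$, and neither is forced by $d_{c}(G)>0$ alone. Before committing to these two strict inequalities I would look for an extra ingredient — most plausibly a connectedness hypothesis, in the spirit of the Boros \emph{et al.} result $\xi(G)>\alpha(G)-\mu(G)$ quoted in the introduction, or a Gallai--Edmonds/matching refinement of Theorem \ref{Theorem1} — and I would first stress-test the claims on small non-quasi-regularizable graphs such as $P_{3}$ to be sure the strictness survives.
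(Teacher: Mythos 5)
Your positive steps coincide exactly with the paper's own proof. The paper, too, reduces everything to Theorem \ref{th7} and adds only the strict inequality $\varepsilon(G)>d_{c}(G)$, obtained precisely as you obtain it: $\mathrm{\ker}(G)$ is critical, non-quasi-regularizability makes it nonempty (the paper notes $\left\vert \mathrm{\ker}(G)\right\vert \geq 2$), and the absence of isolated vertices gives $N\left(\mathrm{\ker}(G)\right)\neq\emptyset$, whence $\varepsilon(G)=d_{c}(G)+\left\vert N\left(\mathrm{\ker}(G)\right)\right\vert >d_{c}(G)$. After that, the paper's proof consists of the single sentence ``Further, using Theorem \ref{th7}, we get both \emph{(i)} and \emph{(ii)}'' --- that is, it supplies no argument at all for exactly the two points you isolated as obstacles: the endpoint $\alpha(G)-\mu(G)\geq 1$ in \emph{(i)} and the strictness in \emph{(ii)}.

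Your suspicion is vindicated rather than being a defect of your proposal: both flagged claims are false as stated. For \emph{(ii)}, your own proposed test graph $P_{3}$ is already a counterexample: $\xi(P_{3})=\varepsilon(P_{3})=2$ and $d_{c}(P_{3})=\alpha(P_{3})-\mu(P_{3})=1$, so the right-hand side of \emph{(ii)} equals $2=\xi(P_{3})$; the same collapse occurs for $K_{2,3}$ with the very values $\xi=\varepsilon$ and $d_{c}=\alpha-\mu$ that the paper records immediately before the corollary. As you observe, \emph{(ii)} genuinely needs $\xi(G)>\varepsilon(G)$ or $d_{c}(G)>\alpha(G)-\mu(G)$, and $d_{c}(G)>0$ forces neither. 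For the endpoint of \emph{(i)}, $d_{c}(G)>0$ does not imply $\alpha(G)-\mu(G)\geq 1$: attach two pendant vertices $u,w$ to one vertex $v$ of a $K_{5}$; then $\{u,w\}$ witnesses $d_{c}(G)=1>0$, so $G$ is non-quasi-regularizable by Theorem \ref{th11}, yet $\alpha(G)=\mu(G)=3$ (Corollary \ref{cor1} only yields $\alpha-\mu\leq d_{c}$, which points the wrong way). Note that this counterexample is connected, so the connectivity hypothesis you floated would not rescue the endpoint either, though a K\"{o}nig-Egerv\'{a}ry hypothesis (under which $d_{c}=\alpha-\mu$) would; even then \emph{(ii)} remains false, since $P_{3}$ is a connected K\"{o}nig-Egerv\'{a}ry graph. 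What is actually provable --- and what you proved, by the same route as the paper --- is $n\geq\zeta(G)\geq\alpha(G)\geq\xi(G)\geq\varepsilon(G)>d_{c}(G)\geq\max\left\{1,\alpha(G)-\mu(G)\right\}$ together with the non-strict inequality of Theorem \ref{th7}\emph{(iii)}. The gap is in the paper, not in your reasoning.
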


\begin{proof}
According to Theorem \ref{th11}, $G$ is non-quasi-regularizable if and only if
$\mathrm{\ker}(G)\neq\emptyset$, i.e., $\left\vert \mathrm{\ker}(G)\right\vert
\geq2$. The fact that $G$ has no isolated vertices implies $N\left(
\mathrm{\ker}(G)\right)  \neq\emptyset$, and consequently, it follows
$\varepsilon(G)=\left\vert \mathrm{\ker}(G)\right\vert >\left\vert
\mathrm{\ker}(G)\right\vert -\left\vert N\left(  \mathrm{\ker}(G)\right)
\right\vert =d_{c}(G)$. Further, using Theorem \ref{th7}, we get both
\emph{(i)} and \emph{(ii)}.
\end{proof}

\begin{corollary}
\cite{ChCh2008} If there is some $S\in\mathrm{Ind}(G)$ with $\left\vert
S\right\vert >\left\vert N(S)\right\vert $, then $\xi\left(  G\right)
>d_{c}\left(  G\right)  $.
\end{corollary}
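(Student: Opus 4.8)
The plan is to recognize that the hypothesis is nothing but a restatement of non-quasi-regularizability, and then to invoke the Corollary immediately preceding this one. First I would observe that an independent set $S$ with $\left\vert S\right\vert >\left\vert N(S)\right\vert $ violates condition \emph{(ii)} of Theorem \ref{th11}; since that condition characterizes quasi-regularizable graphs, the existence of such an $S$ says precisely that $G$ is \emph{not} quasi-regularizable.

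Next I would translate this into a statement about the critical difference. By definition, $id_{c}(G)=\max\{d(I):I\in\mathrm{Ind}(G)\}\geq\left\vert S\right\vert -\left\vert N(S)\right\vert >0$, and by Theorem \ref{Theorem3} we have $d_{c}(G)=id_{c}(G)$, so the hypothesis is equivalent to $d_{c}(G)>0$. This is exactly the standing assumption of the preceding Corollary, which therefore applies verbatim.

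Finally, with $d_{c}(G)>0$ in hand, part \emph{(i)} of that Corollary furnishes the chain $\xi\left(  G\right)  \geq\varepsilon\left(  G\right)  >d_{c}\left(  G\right)  $, from which the claimed inequality $\xi\left(  G\right)  >d_{c}\left(  G\right)  $ follows at once.

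There is essentially no obstacle here: the entire content of this statement has already been packaged into the preceding Corollary, and the only thing to check is the purely definitional equivalence between the hypothesis and the condition $d_{c}(G)>0$. The single point worth stating carefully is \emph{why} the inequality $\varepsilon(G)>d_{c}(G)$ is strict rather than merely $\varepsilon(G)\geq d_{c}(G)$. This rests on two facts already secured earlier: by Theorem \ref{th77}\emph{(ii)} the set $\mathrm{\ker}(G)$ is itself critical, so $\left\vert \mathrm{\ker}(G)\right\vert -\left\vert N(\mathrm{\ker}(G))\right\vert =d_{c}(G)$; and since $d_{c}(G)>0$ forces $\mathrm{\ker}(G)\neq\emptyset$ while $G$ has no isolated vertices, we get $N(\mathrm{\ker}(G))\neq\emptyset$, whence $\varepsilon(G)=\left\vert \mathrm{\ker}(G)\right\vert >\left\vert \mathrm{\ker}(G)\right\vert -\left\vert N(\mathrm{\ker}(G))\right\vert =d_{c}(G)$.
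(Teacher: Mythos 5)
Your proposal is correct and follows exactly the route the paper intends: the hypothesis gives $id_{c}(G)>0$, hence $d_{c}(G)>0$ by Theorem \ref{Theorem3}, and the chain $\xi\left(  G\right)  \geq\varepsilon\left(  G\right)  >d_{c}\left(  G\right)  $ from part \emph{(i)} of the preceding Corollary yields the claim, the strict inequality resting (as you note, and as the paper's own proof of that Corollary shows) on $\mathrm{\ker}(G)$ being a nonempty critical set in a graph without isolated vertices. The paper offers no separate proof of this statement precisely because it is this immediate consequence, so your argument matches it in substance and level of detail.
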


\section{Conclusions}

Writing this paper we have been motivated by the inequality
\[
\xi(G)=\left\vert \mathrm{core}(G)\right\vert >\alpha(G)-\mu(G),
\]
which is true for every graph $G$ without isolated vertices, such that
$\alpha(G)>\mu(G)$ \cite{BorosGolLev}. What we have found is that there exists
a subset of $\mathrm{core}(G)$, which is a real obstacle to its nonemptiness.
The cardinality of this subset, namely, $\varepsilon\left(  G\right)
=\left\vert \mathrm{\ker}(G)\right\vert $ stands out above $\alpha(G)-\mu(G)$
on its own.

The problem of whether there are vertices in a given graph $G$ belonging to
$\mathrm{core}(G)$ is \textbf{NP}-hard \cite{BorosGolLev}. On the other hand,
it has been noticed that for some families of graphs $\mathrm{core}\left(
G\right)  $ may be computed in polynomial time.

We conclude with the following question.

\begin{problem}
Is it true that for any fixed positive integer $k$, to decide if
$\varepsilon\left(  G\right)  >k$ is \textbf{NP}-complete?
\end{problem}

\end{document}